\def\A{{\cal A}}
\def\K{{\cal K}}
\def\R{\mathbb{R}}
\def\C{\mathbb{C}}
\def\Z{\mathbb{Z}}
\def\unitsphere{\mathbb{S}^2}
\DeclareMathOperator{\Id}{Id}
\newtheorem{definition}{Definition}
\newtheorem{lemma}{Lemma}
\begin{document}

\title{\bf On the stability of periodic $N$-body motions with the
  symmetry of Platonic polyhedra}

\author{M. Fenucci\footnote{Marco Fenucci, Dipartimento di Matematica,
    Universit\`a di Pisa, Largo B. Pontecorvo, 5, Pisa, Italy, {\tt
      fenucci@mail.dm.unipi.it}} and
  G. F. Gronchi\footnote{Giovanni~F. Gronchi, Dipartimento di
    Matematica, Universit\`a di Pisa, Largo B. Pontecorvo, 5, Pisa,
    Italy, {\tt gronchi@dm.unipi.it}}}


\maketitle

\begin{abstract}
In \cite{fgn10} several periodic orbits of the Newtonian $N$-body
problem have been found as minimizers of the Lagrangian action in
suitable sets of $T$-periodic loops, for a given $T>0$.  Each of them
share the symmetry of one Platonic polyhedron.
In this paper we first present an algorithm to enumerate all the
orbits that can be found following the proof in \cite{fgn10}. Then we
describe a procedure aimed to compute them and study their stability. Our
computations suggest that all these periodic orbits are unstable.  For
some cases we produce a computer-assisted proof of their instability
using multiple precision interval arithmetic.
\end{abstract}

\section{Introduction}

The existence of several periodic orbits of the Newtonian $N$-body
problem has been proved by means of variational methods, see
e.g. \cite{CM2000, hiphop, FT2004, terracini, terrvent07}.  In most
cases these orbits are found as minimizers of the Lagrangian action
functional and the bodies have all the same mass.  One difficulty with
the variational approach is that the Lagrangian action functional $\A$
is not coercive on the whole Sobolev space of $T$-periodic loops, for
which a natural choice is $H^1_T(\R,\R^3)$. We can overcome this
problem by restricting the domain of the action to symmetric loops or
by adding topological constraints, e.g. \cite{dgg89, bcz91, gordon77}.
Another difficulty is to prove that the minimizers are free of
collisions. For this purpose we can use different techniques, like
level estimates or local perturbations, see \cite{marchal01, ch02,
  chen03}.
We observe that the existence of periodic orbits with different
masses, minimizing the Lagrangian action, has been proved for the case
of three bodies \cite{Chen2008}. Moreover, also periodic orbits that are not
minimizers have been found using the mountain pass theorem
\cite{aribarterr}.

%
%

Besides the theoretical approach, also numerical methods have been used
to search for periodic motions in a variational context.  The first
evidence of the existence of a periodic orbit of the $3$-body problem
where three equal masses follow the same eight-shaped trajectory (the {\em Figure
  Eight}) can be found in \cite{moore93}.
Several periodic motions with a rich symmetry structure can be found
in \cite{simo:newfamilies, simo2000}, where the term {\em
  choreography} was first used to denote a motion of $N$ equal masses
on the same closed path equally shifted in phase. The
introduction of rigorous numerical techniques, see \cite{zgli:lohner},
led to computer-assisted proofs of the existence of periodic orbits in
dynamical systems \cite{kapela02}.
Numerical methods have been also used to study bifurcations and
stability of such periodic orbits. For example the linear and KAM stability
of the Figure Eight were first noticed in \cite{simo2002}. Later on these
stability results were made rigorous with a computer-assisted proof
\cite{kapela:eight, kapela:KAM}.

In this paper we focus on periodic motions of the Newtonian $N$-body
problem, with equal masses, sharing the symmetry of Platonic
polyhedra. In particular we present an algorithm to enumerate all the
orbits that can be found following the proof in \cite{fgn10}, which
minimize the Lagrangian action in suitable sets of $T$-periodic loops,
for a given $T>0$.  Then we describe a procedure aimed to compute them and
study their stability. Our computations suggest that all these
periodic orbits are unstable.  For some cases we produce a
computer-assisted proof of their instability using multiple precision
interval arithmetic.

The paper is organized as follows. In Section~\ref{s:exist} we recall
the steps of the existence proof of non-collision minimizers of the action $\A$.
In Sections~\ref{s:enum}, \ref{s:numcomp} we present a method to
enumerate all the periodic orbits and to compute them. The linear
stability theory for this case is briefly reviewed in
Section~\ref{s:ls}. In Section~\ref{s:valid} we describe a procedure
to check the conditions for stability with rigorous numerics, and we
perform a computer-assisted proof of the instability for some of these
orbits.

\section{Proving the existence of non-collision minimizers}
\label{s:exist}

We recall the steps of the proof of the existence of the periodic
orbits given in \cite{fgn10}. Let us fix a positive number $T$ and let
${\cal R}$ be the rotation group of one of the five Platonic
polyhedra.  We consider the motion of $N=|{\cal R}|$ particles
with unitary mass.
Let us denote by $u_I:\R\to\R^3$ the map describing the
motion of one of these particles, that we call {\em generating particle}.
Assume that
\begin{itemize}
\item[(a)] the motion $u_R$, $R\in{\cal R}\setminus\{I\}$ of the other
  particles fulfills the relation
\begin{equation}
  u_R = Ru_I,
  \label{symcond}
\end{equation}

\item[(b)] the trajectory of the generating particle belongs to a
  given non-trivial free homotopy class of $\R^3\setminus\Gamma$,
  where
\[
  \Gamma=\displaystyle\cup_{R\in{{\mathcal R}\setminus\{I\}}}r(R),
\]
with $r(R)$ the rotation axis of $R$.

\item[(c)] there exist $R\in{\cal R}$ and $M>0$ such that 
\begin{equation}
u_I(t+T/M) = Ru_I(t),
\label{extrasym}
\end{equation}
for all $t \in \R$.

\end{itemize}
Imposing the symmetry \eqref{symcond}, the action functional of the
$N$-body problem depends only on the motion of the generating particle
and it is expressed by
\begin{equation}
\A(u_I) = N\int_0^T \biggl(\frac{1}{2}|\dot{u}_I|^2 + \frac{1}{2}\sum_{R\in{\cal
R}\setminus\{I\}}\frac{1}{|(R-I)u_I|}\biggr)\,dt.
\label{action_formula}
\end{equation}
We search for periodic motions by minimizing $\A$ on subsets $\K$ of
a Sobolev space of $T$-periodic maps. More precisely we choose the cones
\begin{equation}
\K = \{ u_I \in H_T^1(\R,\R^3): (b), (c) \mbox{ hold} \}.
\label{conesK}
\end{equation}

\subsection{Encoding $\K$ and existence of minimizers}

We describe two ways to encode the topological constraints defining
the cones $\K$. Let $\tilde{\cal R}$ be the full symmetry group
(including reflections) related to ${\cal R}$.
The reflection planes induce a tessellation of the unit sphere
$\unitsphere$, as shown in Figure~\ref{fig:tessel}, with $2N$ spherical
triangles.
\begin{figure}[!h]
   \centerline{\includegraphics[width=4.5cm]{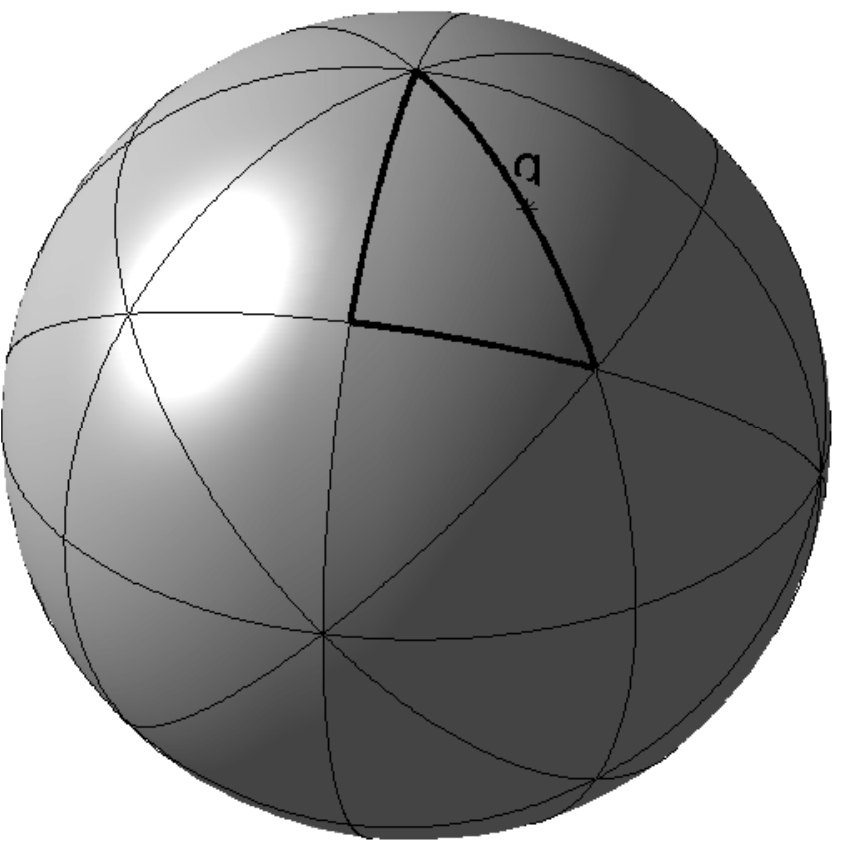}
\hskip 1.5cm
\includegraphics[width=4.1cm]{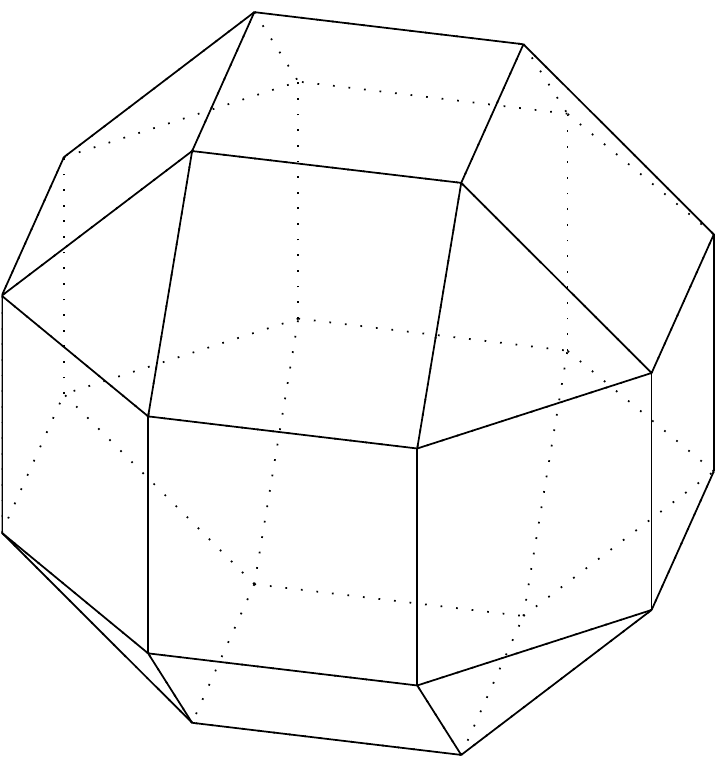}}
\caption{Tessellation of $\unitsphere$ for ${\cal R}={\cal
    O}$ and the Archimedean polyhedron ${\cal Q}_{\cal O}$.}
\label{fig:tessel}
\end{figure}
Each vertex of such triangles corresponds to a pole $p\in{\cal P} =
\Gamma\cap\unitsphere$. Let us select one triangle, say $\tau$. By a
suitable choice of a point $q\in\partial\tau$ (see
Figure~\ref{fig:tessel}) we can define an Archimedean polyhedron
${\cal Q}_{\cal R}$, which is the convex hull of the orbit of $q$
under ${\cal R}$, and therefore it is strictly related to the symmetry
group ${\cal R}$. For details see \cite{fgn10}.

\noindent We can characterize a cone ${\cal K}$ by a periodic sequence
$\mathfrak{t}=\{\tau_k\}_{k\in\Z}$ of triangles of the tessellation
such that $\tau_{k+1}$ shares an edge with $\tau_k$ and
$\tau_{k+1}\neq \tau_{k-1}$ for each $k\in\Z$. This sequence is
uniquely determined by $\K$ up to translations, and describes the
homotopy class of the admissible paths followed by the generating
particle (see Figure~\ref{fig:encoding}, left).


\noindent We can also characterize $\K$ by a periodic sequence
$\nu=\{\nu_k\}_{k\in\Z}$ of vertexes of ${\cal Q}_{\cal R}$ such that
the segment $[\nu_k,\nu_{k+1}]$ is an edge of ${\cal Q}_{\cal R}$ and
$\nu_{k+1}\neq \nu_{k-1}$ for each $k\in\Z$. Also the sequence $\nu$
is uniquely determined by $\K$ up to translations, and with it we can
construct a piecewise linear loop $v$, joining consecutive vertexes
$\nu_k$ with constant speed, that represents a possible motion of the
generating particle (see Figure~\ref{fig:encoding}, right).

\begin{figure}[!h]
\centerline{
   \includegraphics[width=4.5cm]{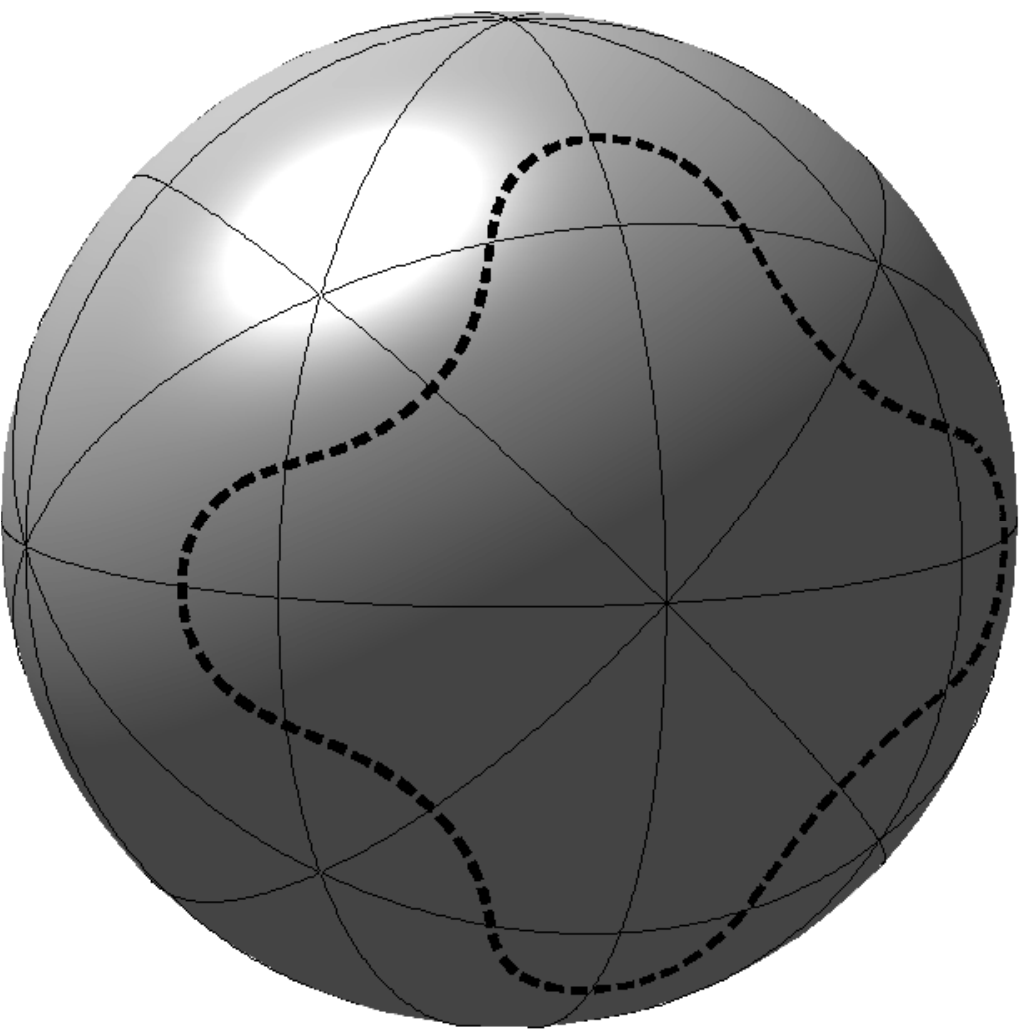}
\hskip 1.5cm
\includegraphics[width=4.2cm]{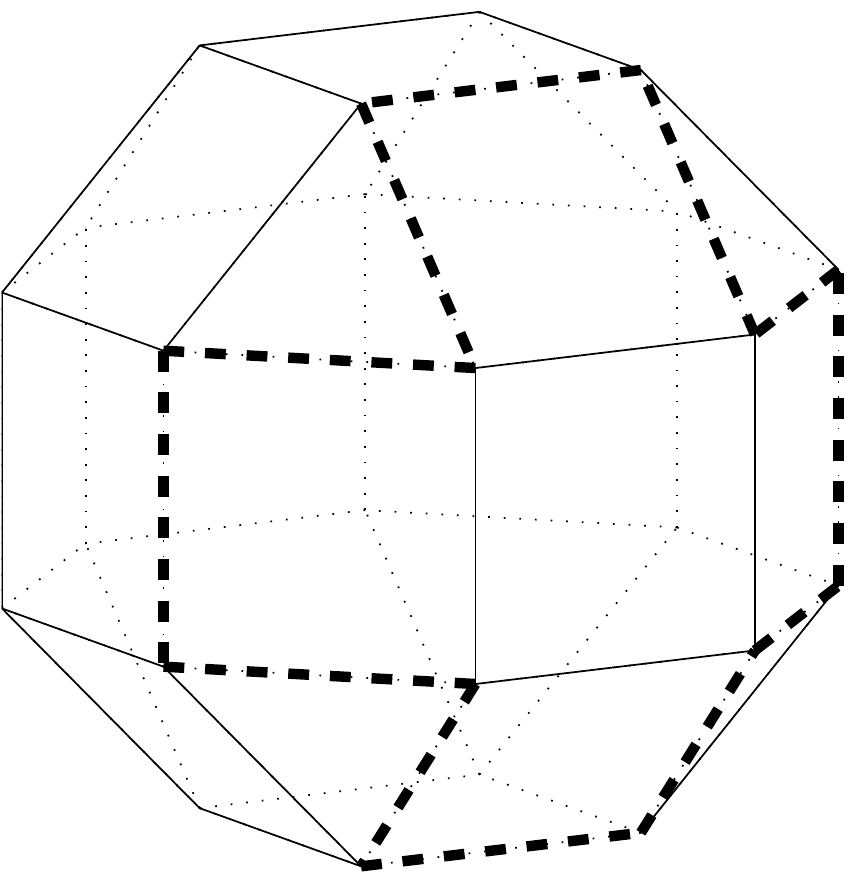}}
\caption{Encoding a cone $\K$. Left: the dashed path on $\unitsphere$
  describes the periodic sequence $\mathfrak{t}$ of triangles of the
  tessellation. Right: the dashed piecewise linear path describes the
  corresponding periodic sequence $\nu$ of vertexes of ${\cal Q}_{\cal
    O}$.}
\label{fig:encoding}
\end{figure}

The existence of a minimizer $u^*_I$ of $\A$ restricted to a cone $\K=\K(\nu)$
can be shown by standard methods of calculus of variations, provided
that
\begin{equation}
\bigcap_{\tau_j\in\mathfrak{t}}\overline{\tau_{j}} = \emptyset,
\label{nonunosolo}
\end{equation}
where $\mathfrak{t}$ is the sequence of spherical triangles
corresponding to $\K$. Condition (\ref{nonunosolo}) means that the
trajectory of the generating particles does not wind around one
rotation axis only: this ensures the coercivity of the action
functional and therefore a minimizer exists.

For later use we introduce the following definitions.
\begin{definition}
We say that a cone $\K$ is `simple' if the corresponding sequence
$\mathfrak{t}$ does not contain a string $\tau_k\ldots \tau_{k+2\mathfrak{o}}$
such that
\[
\bigcap_{j=0}^{2\mathfrak{o}}\overline{\tau_{k+j}} = p,
\]
where $p\in\cal P$ and $\mathfrak{o}$ is the order of $p$.
\label{def:simpCone}
\end{definition}

\begin{definition}
  We say that a cone $\K$ winds around two coboundary axes
  if
  \begin{itemize}
  \item[i)] the corresponding sequence $\mathfrak{t}$ is the union of
    two strings, $\tau_{k_j}\ldots \tau_{k_j+2\mathfrak{o}_j}$,
    $j=1,2$, such that
    \[
\bigcap_{h=0}^{2\mathfrak{o}_j}\overline{\tau_{k_j+h}} = p_j, 
\]
where $\mathfrak{o}_j$ is the order of $p_j$, for two different poles
$p_1, p_2$;
  \item[ii)] there exists $\tau_k\in\mathfrak{t}$ such that $p_1, p_2\in
    \overline{\tau_k}$.
  \end{itemize}
  \label{def:2ax}
\end{definition}

To show that for a suitable choice of $\K$ the minimizers are
collision-free we consider total and partial collisions separately.

\subsection{Total collisions}
We note that a total collision of the $N$ particles occurs at time
$t_c$ iff $u_I(t_c) = 0$.  If there is a total collision then, by
condition (c), there are $M$ of them per period. For a minimizer
$u^*_I$ with a total collision we can give the following \textit{a priori}
estimate  for the action (see \cite[Section 5]{fgn10}):
\begin{equation}
   \A(u^*_I) \geq \alpha_{\cal R, M},
   \label{eq:aPrioriEst}
\end{equation}
where $\alpha_{\cal R, M}$ depends only on $M$ and $T$. Rounded values
of $\alpha_{\cal R, M}$ for $T=1$ are given in Table \ref{tab:aR}.
\begin{table}[!h]
\begin{center}
\begin{tabular}{l|c|c|c|c|c}
{{\small ${\cal R}$}\raisebox{0.1cm}{$\diagdown$}\raisebox{0.2cm}{\small$M$}}
&\raisebox{0.2cm}{1}  &\raisebox{0.2cm}{2}  &\raisebox{0.2cm}{3}  
&\raisebox{0.2cm}{4}  &\raisebox{0.2cm}{5}\\
\hline
${\cal T}$  &$132.695$   &$210.640$   &$276.017$ &/  &/\\
\hline
${\cal O}$  &$457.184$   &$725.734$   &$950.981$  &$1152.032$ &/\\
\hline
${\cal I}$ &$2296.892$  &$3646.089$  &$4777.728$  &/  &$6716.154$\\
\end{tabular}
\caption{Lower bounds $a_{{\cal R},M}$ for loops with $M$ total
  collisions ($T=1$).}
\label{tab:aR}
\end{center}
\end{table}
For some sequences $\nu$, the action of the related piecewise linear
loop $v$ is lower than $\alpha_{\cal R, M}$. Therefore, minimizing the
action over the cones $\cal K$ defined by such sequences yields
minimizers without total collisions.  The action of the piecewise
linear loop $v$ can be computed explicitly and it is given by
\begin{equation}
   \A(v) = \frac{3}{2\cdot 4^{1/3}} N \ell^{2/3}(k_1\zeta_1+k_2\zeta_2)^{2/3},
   \label{eq:actLPL}
\end{equation}
where $k_1,k_2$ are the numbers of sides of the two different kinds
(i.e. separating different pairs of polygons) in the trajectory of
$v$, $\ell$ is the length of the sides (assuming ${\cal Q}_{\cal R}$
is inscribed in the unit sphere) and $\zeta_1,\zeta_2$ are the values
of explicitly computable integrals, see Table
\ref{tab:upsilon}. Relations \eqref{eq:aPrioriEst} and
\eqref{eq:actLPL} will be useful later in Section 2

\begin{table}[!h]
\begin{center}
\begin{tabular}{l|c|c|c}
${\cal R}$  &${\cal T}$ &${\cal O}$ &${\cal I}$ \\
\hline
$\ell$        &$1.0$ &$0.7149$ &$0.4479$ \\
\hline
$\zeta_1$  &$9.5084$  &$20.3225$ &$53.9904$ \\
\hline
$\zeta_2$  &$9.5084$  &$19.7400$ &$52.5762$ 
\end{tabular}
\caption{Numerical values of $\ell$, $\zeta_1$, $\zeta_2$.}
\label{tab:upsilon}
\end{center}
\end{table}

\subsection{Partial collisions}
Because of the symmetry a partial collision occurs at time $t_c$ iff
$u_I(t_c)\in\Gamma\setminus\{0\}$, that is when the generating
particle passes through a rotation axis $r$. Indeed, in this case all
the particles collide in separate clusters, each containing as many
particles as the order of $r$.  We summarize below the technique used
in \cite{fgn10} to deal with partial collisions.  We can associate to
a partial collision two unit vectors $\mathsf{n}^+, \mathsf{n}^-$,
orthogonal to the collision axis $r$, corresponding to the ejection
and collision limit directions respectively. By means of these vectors
we can define a \textit{collision angle} $\theta$ and, assuming that
$\cal K$ is {\em simple}, we have
\[
-\frac{\pi}{\mathfrak{o}_r} \leq \theta \leq 2\pi,
\]
where $\mathfrak{o}_r$ is the order of the maximal cyclic group
related to the collision axis $r$.
If $\theta
\neq 2\pi$ we can exclude partial collisions by local perturbations,
constructed by using either direct or indirect arcs \cite{ch05}, and
with a blow up technique \cite{FT2004}. If $\theta = 2\pi$, then
\begin{itemize}
   \item[i)] $\mathsf{n}^+ = \mathsf{n}^-$,
   \item[ii)] the plane $\pi_{r,\mathsf{n}}$ generated by $r$ and $\mathsf{n}=\mathsf{n}^{\pm}$
     is fixed by some reflection $\tilde{R} \in \tilde{\cal R}$,
\end{itemize}
and we say that the partial collision is of type
$(\rightrightarrows)$. In this case we cannot exclude the singularity
by a local perturbation because the indirect arc is not
available. However, it turns out that in this case the trajectory of
the generating particle must lie on a reflection plane, bouncing
between two coboundary rotation axes.

\noindent We conclude that, provided that $\cal K$ is simple and it
does not wind around two coboundary axes, the minimizer $u^*_I$ of the
action $\A$ restricted to $\K$ is free of partial collision, hence it
is a smooth periodic solution of the $N$-body problem.

\section{Enumerating all the collision-free minimizers}
\label{s:enum}

Here we introduce an algorithm to generate all the sequences $\nu$ of
length $l$, for some admissible integer $l$. Then we select only the
periodic ones, and control whether they satisfy all the conditions
ensuring the existence of collision-free minimizers of
(\ref{action_formula}) in the corresponding cone $\K=\K(\nu)$. Precisely,
our algorithm is based on the following steps:
\begin{enumerate}
  \item Find the maximal admissible length $l_{\text{max}}$ and other
      constraints on the length $l$.

  \item Construct all the periodic sequences $\nu$ of vertexes of the
     Archimedean polyhedron $\mathcal{Q}_{\mathcal{R}}$.

  \item Exclude the sequences that wind around one axis only or around
    two coboundary axes.

  \item Exclude the sequences that do not respect the additional
     choreography symmetry \eqref{extrasym}.
  \item Exclude the sequences that give rise to non simple cones.

\end{enumerate}

\subsection{Constraints on the length}

To exclude total collisions we use \eqref{eq:aPrioriEst} and
\eqref{eq:actLPL}.  If $v$ is the piecewise linear loop defined by the
sequence $\nu$, the relation $\A(v) < \alpha_{\mathcal{R},M}$ can be
rewritten as
\begin{equation}
   k_1\zeta_1 + k_2\zeta_2 < \bigg( \alpha_{\mathcal{R}, M} \frac{2\cdot
   4^{1/3}}{3T^{1/3}}\frac{1}{N}\frac{1}{\ell^{2/3}} \bigg)^{3/2} := K.
   \label{eq:actionPiecewise}
\end{equation}
Since the coefficients $\zeta_1$ and $\zeta_2$ are positive, for each
$M$ there exist only a finite number of positive integers $(k_1,k_2)$
fulfilling (\ref{eq:actionPiecewise}).
Given ${\cal R}\in\{{\cal T}, {\cal O}, {\cal I}\}$, taking the
maximal value of $k_1+k_2$ we get a constraint on the maximal length
$l_{\text{max}} = l_{\text{max}}(M)$ of the sequence $\nu$, see
Table~\ref{tab:lmax}.
\begin{table}[!h]
\begin{center}
\begin{tabular}{l|c|c|c|c|c}
 {\small ${\cal R}$}\raisebox{0.1cm}{$\diagdown$}\raisebox{0.2cm}{\small $M$}
&\raisebox{0.2cm}{1}  &\raisebox{0.2cm}{2}  &\raisebox{0.2cm}{3}  
&\raisebox{0.2cm}{4}  &\raisebox{0.2cm}{5}\\
\hline
${\cal T}$  &$4$   &$8$   &$12$ &/  &/\\
\hline
${\cal O}$  &$6$   &$12$   &$19$  &$25$ &/\\
\hline
${\cal I}$ &$10$  &$21$  &$32$  &/  &$54$
\end{tabular}
\caption{Values of $l_{\text{max}}(M)$ for the different symmetry groups.}
\label{tab:lmax}
\end{center}
\end{table}

On the other hand, we can also give a constraint on the minimal length
$l_{\rm min}$. In fact, a periodic sequence of length $l\leq 5$ either
winds around one axis only or encloses two coboundary axes.  However,
for all the five Platonic polyhedra there exists at least a good
sequence $\nu$ of length $6$: for this reason we set $l_{\rm min} =
6$.

Furthermore, in the case of $\mathcal{R}=\mathcal{T},\mathcal{O}$ we
cannot have $M = 1$. In fact:
\begin{itemize}
\item[-] if $\mathcal{R}=\mathcal{T}$,  $l_{\text{max}}(1)=4$,
  therefore we cannot construct any good sequence $\nu$.
\item[-] if $\mathcal{R}=\mathcal{O}$, $l_{\text{max}}(1)=6$, and the only sequences of length $6$ that do not wind
      around two coboundary axes have $M=2$.
\end{itemize}

\subsection{Periodic sequences construction}
To know which vertexes are reachable from a fixed vertex $V_j$ of
$\mathcal{Q}_{\mathcal{R}}$, we interpret the polyhedron as a connected graph: in
this manner we have an \textit{adjacency matrix} $A$ associated to the graph. In this
matrix we want to store the information about the kind of sides connecting two different
vertexes.
The generic entry of $A$ is
\begin{equation}
   A_{ij} = 
   \begin{cases}
      1 & \text{if the vertex $i$ and the vertex $j$ are connected by a side of type 1}, \\
      2 & \text{if the vertex $i$ and the vertex $j$ are connected by a side of type 2}, \\
      0 & \text{otherwise.}
   \end{cases}
   \label{eq:adj_matrix}
\end{equation}
For a fixed length $l \in \{ l_{\rm min},\dots,l_{\rm max} \}$,
we want to generate all the sequences of vertexes with that length,
starting from vertex $1$. Because of the symmetry, we can select the first side
arbitrarily, while in the other steps we can
choose only among $3$ different vertexes, since we do not want to
travel forward and backward along the same side.  Therefore, the
total number of sequences with length $l$ is $3^{l-1}$. To generate
all these different sequences we produce an array of choices
$c=(c_1,\dots,c_l)$ such that $c_1 = 1$ and
$c_j \in \{ 1,2,3 \}, j=2,\dots,l$. Each entry tells us the
way to construct the sequence: if $v_1,v_2,v_3$ are the number of the
vertexes reachable from $\nu_j$ (with $v_i$ sorted in ascending
order), then $\nu_{j+1} = v_{c_j}$. All the different $3^{l-1}$
arrays of choices can be generated using an integer number $k \in
\{0,\dots,3^{l-1}-1 \}$, through its base $3$ representation.

\subsection{Winding around one axis only or two coboundary axes}
To check whether a closed sequence winds around one axis only we have
to take into account the type of Archimedean polyhedron
$\mathcal{Q}_{\mathcal{R}}$.  In the cases
$\mathcal{R}=\mathcal{T},\mathcal{O}$ it is sufficient to count the
number $m$ of different vertexes appearing in $\nu$. If $m=3,4$ then
$\nu$ winds around one axis only. The case of
$\mathcal{R}=\mathcal{I}$ is different, since also pentagonal faces
appear. If $m=5$, to check this property we can take the mean of the
coordinates of the touched vertexes and control whether it coincides
with a rotation axis or not.

We note that for $M$ different from $1$, a periodic sequence
satisfying the choreography condition \eqref{eq:choreoCond}, introduced
in the next paragraph, cannot wind around two coboundary axes. For this
reason we decided to avoid performing this additional control, since
$M=1$ is possible only in the case of $\mathcal{R}=\mathcal{I}$. In
this case we exclude the non-admissible sequences in a non-automated
way, looking at them one by one.  We point out that such a sequence
can be of three different types:
\begin{enumerate}
   \item a pentagonal face and a triangular face sharing a vertex;
   \item a pentagonal face and a square face sharing a side;
   \item a square face and a triangular face sharing a side.
\end{enumerate}
The sequences that travel along the boundary of two square faces sharing a 
vertex winds around two axes too, but these axes are not coboundary, 
thus we have to keep them.

\subsection{Choreography condition} 
Condition \eqref{extrasym} is satisfied if and only if there
exists a rotation $R \in \mathcal{R}$ such that
\begin{equation}
   \eta_R\nu_j = \nu_{j+k},
   \label{eq:choreoCond}
\end{equation}
for some integer $k$, where $\eta_R$ denotes the permutation of the
vertexes of $\mathcal{Q}_{\mathcal{R}}$ induced by $R$. To check 
condition \eqref{eq:choreoCond}, we have to construct 
$\eta_R$. Let $V_1,\dots,V_N \in \R^3$ be the coordinates
of all the vertexes of $\mathcal{Q}_{\mathcal{R}}$: since each rotation $R$ leaves
$\mathcal{Q}_{\mathcal{R}}$ unchanged, it sends vertexes into vertexes.
Therefore, we construct the matrices $\eta_R$ such that
\[
(\eta_R)_{ji} =
\begin{cases}
   1 & \text{if } RV_i = V_j, \\
   0 & \text{otherwise}.
\end{cases}
\]
It results that each $\eta_R \in \R^{N \times N}$ is a \textit{permutation
  matrix}.  The product of $\eta_R$ with the vector $\textbf{v} =
(1,\dots,N)^T$ provides the permutation of $\{ 1,\dots,N \}$.
At this point, given a rotation $R$, we are able to write the
permuted sequence $\eta_R\nu$: we can simply check that
\eqref{eq:choreoCond} holds by comparing the sequences $\nu$ and $\eta_R
\nu$. Moreover, if the condition is satisfied, we compute the value
$M=k_\nu/k$, where $k_\nu$ is the minimal period of $\nu$.


\subsection{Simple cone control}
Definition \ref{def:simpCone} of simple cones is given by using the
tessellation of the sphere induced by the reflection planes of the
Platonic polyhedra.  To decide whether a cone $\K$
is simple or not, we must translate this definition into a condition
on the sequence of vertexes. We observe that the only way to produce a
non-simple cone is by traveling all around the boundary of a face
$\mathcal{F}$ of $\mathcal{Q}_{\mathcal{R}}$:
for the cone $\K$, being simple or not depends on the order of the
pole associated to $\mathcal{F}$ and on the way the oriented path
defined by $\nu$ gets to the boundary of $\mathcal{F}$ and leaves it.
We discuss first the case of a triangular face, pointing out that it
is associated to a pole $p$ of order three. Suppose that $\nu$ contains a
subsequence $[\nu_k, \, \nu_{k+1}, \, \nu_{k+2}, \, \nu_{k+3}]$ that
travels all around a triangular face $\mathcal{F}$, that is $\nu_k =
\nu_{k+3}$. Let $\nu_{k-1}$ and $\nu_{k+4}$ be the vertexes before and
after accessing the boundary of $\mathcal{F}$. Four different
cases can occur:
\begin{itemize}
   \item[i)] $\nu_{k+4}$ is a vertex of the triangular face $\mathcal{F}$
     (Fig.~\ref{fig:tri}, top left);
   \item[ii)] the path defined by $\nu$ accesses and leaves $\mathcal{F}$
     through the same side, i.e. $[\nu_{k-1},\nu_k] =
     [\nu_{k+3},\nu_{k+4}]$ (Fig.~\ref{fig:tri}, top right);
   \item[iii)] the path defined by $\nu$ accesses and leaves $\mathcal{F}$
     through two different sides describing an angle $\theta >\pi$ around $p$
     (Fig.~\ref{fig:tri}, bottom left);
   \item[iv)] the path defined by $\nu$ accesses and leaves $\mathcal{F}$
     through two different sides describing an angle $\theta<\pi$ around $p$
     (Fig.~\ref{fig:tri}, bottom right).
\end{itemize}
\begin{figure}[!ht]
   \begin{center}
      \includegraphics[scale=0.4]{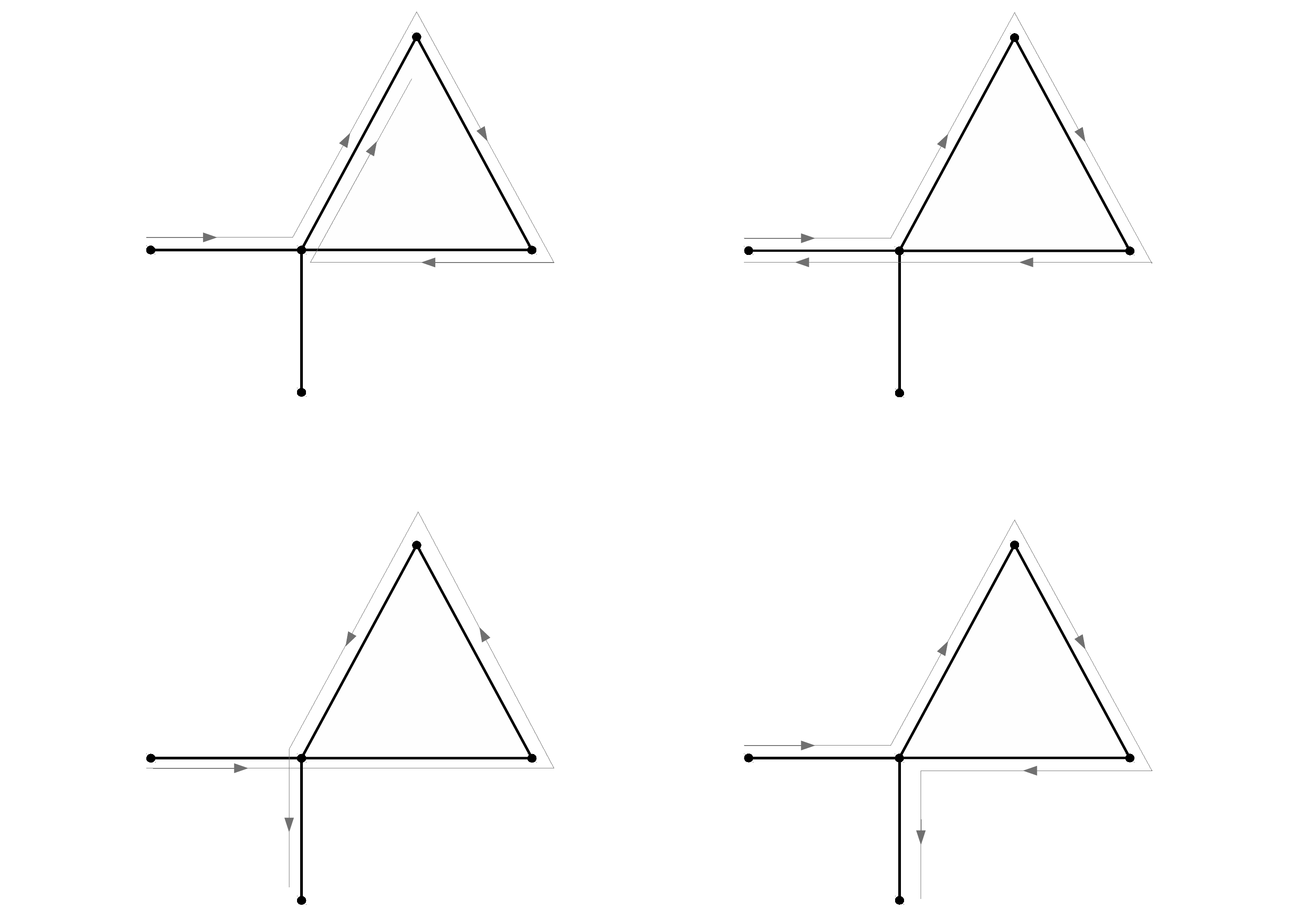}
   \end{center}
   \caption{The different cases occurring for a path around a triangular face of
      $\mathcal{Q}_{\mathcal{R}}$.}
   \label{fig:tri}
\end{figure}

In the first three cases, the sequence $\nu$ defines a non-simple
cone, while case iv) is the only admissible situation for a simple
cone.
Cases iii) and iv) can be distinguished by the sign of
$\vec{a}\times\vec{d}\cdot\vec{b}\times\vec{c}$
where
\[
\vec{a} = [\nu_{k-1},\nu_k], \quad
\vec{b} = [\nu_k,\nu_{k+1}], \quad
\vec{c} = [\nu_{k+2},\nu_{k+3}], \quad
\vec{d} = [\nu_{k+3},\nu_{k+4}].
\]
If this sign is positive we have case iii), if negative case iv).
This argument concludes the discussion about triangular faces.
Actually, we can see that the same argument can be used for square and
pentagonal faces, provided that the associated poles have order
greater than two. However, in the case $\mathcal{R} = \mathcal{O}$, we
can find poles of order two associated to square faces. In this
situation there is no way to travel all around the square face and get
a simple cone.

\subsection{Summary of the procedure} 
\label{s:algor}
Now we summarize the procedure that we adopt.  For each admissible
value of $M$ (see Table~\ref{tab:aR}) we observe that $M$ divides the
possible lengths $l \in \{ l_{\rm min}, \dots, l_{\text{max}}(M) \}$
of the sequence $\nu$.  Then, for each integer $h \in \{ 0,\dots,3^{l/M-1}-1
\}$ we perform the following steps:
\begin{enumerate}
   \item construct the array of choices $c$ corresponding to $h$;
   \item generate the sequence $\hat{\nu}$ on the basis of $c$,
     starting from vertex number $1$. Note that $\hat{\nu}$ is a sequence
     with length $l/M+1$;
   \item control whether $\eta_R\hat{\nu}_1 = \hat{\nu}_{l/M+1}$ for
     some $R \in \mathcal{R}$. In this case we extend $\hat{\nu}$ to a
     sequence $\nu$ with length $l+1$, using the choreography
     condition \eqref{eq:choreoCond};
   \item check whether $\nu$ is periodic or not;
   \item compute the minimal period $k_\nu$ of $\nu$;
   \item check whether $\nu$ winds around one axis only or not;
   \item check whether \eqref{eq:choreoCond} holds or not; if it
     holds, check whether the ratio $k_\nu/k$ is equal to $M$ or not;
   \item compute the values of $k_1, k_2$ and check whether
     \eqref{eq:actionPiecewise} holds or not;
   \item check whether the cone $\K=\K(\nu)$ is simple or not.
\end{enumerate}
If the sequence $\nu$ passes all the controls above, then there exists
a collision-free minimizer of the action $\A$ restricted to $\K$.

\subsection{Results}

The lists of good sequences found by the algorithm described in
Section~\ref{s:algor} for the three groups ${\cal T}$, ${\cal O}$,
${\cal I}$ are available at \cite{MFweb}. Here we
list only the total number of good sequences (i.e. leading to
collision-free minimizers and then classical periodic orbits) for the
different polyhedra, in Table \ref{tab:totNum}.
\begin{table}[!ht]
\begin{center}
\begin{tabular}{|c|c|}
\hline
$M$ & \text{Total number} \\
\hline
$2$ & $3$ \\
$3$ & $6$ \\
\hline
    & $9$ \\
\hline
\end{tabular}
\quad
\begin{tabular}{|c|c|}
\hline
$M$ & \text{Total number} \\
\hline
$2$ & $24$ \\
$3$ & $18$ \\
$4$ & $15$ \\
\hline
    & $57$ \\
\hline
\end{tabular}
\quad
\begin{tabular}{|c|c|}
\hline
$M$ & \text{Total number} \\
\hline
$1$ & $28$ \\
$2$ & $386$ \\
$3$ & $455$ \\
$5$ & $573$ \\
\hline
    & $1442$ \\
\hline
\end{tabular}
\caption{Total number of sequences $\nu$ found for $\mathcal{Q}_{\mathcal{T}},
\mathcal{Q}_{\mathcal{O}}, \mathcal{Q}_{\mathcal{I}}$ respectively, from the left to the
right.} 
\label{tab:totNum}
\end{center}
\end{table}
All the periodic orbits listed in \cite{fgn10} were found again with
this procedure.

We point out that we can identify two different sequences
$\nu,\tilde{\nu}$ if there exists a symmetry $S$ of the polyhedron
$\mathcal{Q}_{\mathcal{R}}$ such that $\eta_R \nu = \tilde{\nu}$,
where $\eta_R$ still denotes the permutation of the vertexes induces
by the symmetry $S$. The results are presented using this
identification.

\section{Numerical computation of the orbits}
\label{s:numcomp}

We describe the procedure that we have used to compute the periodic
orbits described in the previous sections.  Given the sequence $\nu$,
first we search for a Fourier polynomial approximating the minimizer
of the action functional $\A$ in $\K=\K(\nu)$. Then we refine the
approximation by a multiple shooting method that takes into account
the symmetry of the orbit.

\subsection{Approximation with Fourier polynomials}
Following \cite{mn08} and \cite{simo:newfamilies}, we want to find an
approximation of the motion minimizing the action functional $\A$.
To discretize the infinite dimensional set of $T$-periodic loops, we take into account the
truncated Fourier series at some order $F_M$.
We consider only loops $u: [0,T] \to \R^3$ of the form
\begin{equation}
   u(t) = \frac{a_0}{2} + \sum_{k=1}^{F_M} \bigg{[} a_k \cos\bigg{(} \frac{2\pi k }{T} t
   \bigg{)} + b_k \sin \bigg{(} \frac{2\pi k}{T} t \bigg{)}   \bigg{]},
   \label{eq:fourierSum}
\end{equation}
where $a_k, b_k \in \R^3$ are the Fourier coefficients.  
Restricting $\A$ to loops of the form \eqref{eq:fourierSum} we obtain a
function $A: \R^{3(2F_M+1)} \to \R$ that discretizes the action
functional. This function is defined (neglecting the constant factor $N$)
by
\begin{equation}
   A(a_0,a_1,\dots,a_{F_M}, b_1,\dots,b_{F_M}) = \int_{0}^{T}
   \bigg( \frac{1}{2} \lvert\dot{u} \rvert^2 + U(u) \bigg) dt,
   \label{eq:actionDisc}
\end{equation}
where
\[
   U(u) = \frac{1}{2}\sum_{R\in{\cal R}\setminus\{I\}}\frac{1}{|(R-I)u|}.
\]
The derivatives with respect to the Fourier coefficients are
\begin{gather}
   \frac{\partial A}{\partial a_k } = \frac{2(\pi k)^2}{T} a_k + \int_{0}^{T}
   \frac{\partial U}{\partial u}(u(t)) \cos\bigg{(} \frac{2\pi k}{T}t \bigg{)}dt, \quad k \geq 0, 
   \label{eq:partialA}\\ 
   \frac{\partial A}{\partial b_k } = \frac{2(\pi k)^2}{T} b_k + \int_{0}^{T}
   \frac{\partial U}{\partial u}(u(t)) \sin\bigg{(} \frac{2\pi k}{T}t \bigg{)}dt, \quad k
   > 0.
   \label{eq:partialB}
\end{gather}
Note that these derivatives may be large for high frequencies because
of the term $k^2$, and this leads to an instability of the classical
gradient method (see \cite{mn08}). In \cite{mn08} the authors propose
a variant of the gradient method avoiding this problem. If
$a_k$ is the $k$-th Fourier coefficient at some iteration, we obtain
the new coefficient $a_k'$ for the successive step by adding
\begin{equation}
   \delta a_k = a_k'-a_k = -\delta\tau_k \frac{\partial A }{\partial a_k},
   \label{eq:iteration}
\end{equation}
and similarly for the coefficients $b_k$. This means that the decay
rate in the Fourier coefficients is controlled by a parameter
$\delta\tau_k$ which depends also on the order $k$ of the harmonic. If we set
\begin{equation}
   \delta\tau_k = \frac{T}{2(\pi k)^2} \delta,
   \label{eq:deltaTau}
\end{equation}
where $\delta > 0$ is a small positive constant, this removes the high
frequency instability.

To stop the iterations we could check the value of the residual
acceleration, i.e. the difference between the acceleration computed
from $u(t)$ and the force acting on the generating particle at time
$t$. However, this in practice can be done only when there are no
close encounters between the bodies.  In fact, when a passage near a
collision occurs, we have to choose a very large value of $F_M$ (see
\cite{simo:newfamilies}) to obtain a better approximation, and this
slows down the computations. For this reason we choose to stop the
iterations also when the increments $\delta a_k, \delta b_k$ become
small, as suggested in \cite{mn08}.

\subsection{Shooting method}
\label{s:shooting}
In order to refine the computation of the orbits, we use a shooting
method in the phase space of the generating particle, starting from the
Fourier polynomial approximations.  The goal is to solve the problem
\begin{equation}
   \begin{cases}
      \dot{x} = f(x), \\
      x(T/M) = Sx(0),
   \end{cases}
   \label{eq:bvp}
\end{equation}
where $x=(u,\dot{u}) \in \R^6$, $S$ is the matrix
\[
   S = 
   \begin{pmatrix}
      R & 0 \\
      0 & R
   \end{pmatrix},
\] 
with $R \in \mathcal{R}$ and $M > 0$ given by condition (c).  The
differential equation in \eqref{eq:bvp} comes from the Euler-Lagrange
equation of the functional defined by \eqref{action_formula}. Fixed
$n$ points $0 = \tau_0 < \tau_1 < \cdots < \tau_{n}=T/M$, we define
the function $G:\R^{6n} \to \R^{6n}$ as
\begin{equation}
   \begin{cases}
      G_i = \phi^{\tau_i - \tau_{i-1}}(x_{i-1}) - x_i, & i=1,\dots,n-1 \\
      G_n = \phi^{\tau_n-\tau_{n-1}}(x_{n-1}) - Sx_0,
   \end{cases}
   \label{eq:shootingEq}
\end{equation}
where the points $x_i = (u(\tau_i),\dot{u}(\tau_i))$ correspond to a
discretization of the position and velocity of the generating particle,
modeled with the Fourier polynomial approximation.
If we have a $T$-periodic solution $x(t)$ satisfying \eqref{eq:bvp},
the function $G$ evaluated at
\[
X = (x(\tau_0), \dots, x(\tau_{n-1}))
\]
vanishes. To search for the zeros of $G$ we use a least-squares
approach: we set
\[
F(X) = \frac{|G(X)|^2}{2},
\]
and search for the absolute minimum points by a modified Newton method.
The derivatives of $F$ are
\begin{gather}
   \frac{\partial F}{\partial x_j} = \sum_{i=1}^{n} \frac{\partial G_i}{\partial x_j}
   \cdot G_i,
   \label{eq:derGi} \\
   \frac{\partial^2 F}{\partial x_j \partial x_h} = \sum_{i=1}^n \bigg{[}
      \frac{\partial G_i}{\partial x_j}\frac{\partial G_i}{\partial x_h} +
      \frac{\partial^2 G_i}{\partial x_j \partial x_h}G_i
   \bigg{]}.
   \label{eq:der2Gi}
\end{gather}
The Jacobian matrix of $G$ is
\begin{equation}
   \begin{bmatrix}
      M_1 & -\Id & & \\
               & M_2 & \ddots & & \\ 
               & & \ddots & -\Id \\
               -S & & & M_n
   \end{bmatrix},
   \label{eq:jacG}
\end{equation}
where 
\[
   M_i = \frac{\partial}{\partial x} \phi^{\tau_{i}-\tau_{i-1}}(x_i).
\]
If $X'$ denotes the new value of $X$ at some iteration of the modified Newton method and
$\Delta X = X'-X$, at each step we solve the linear system
\begin{equation}
   A(X) \Delta X = - \frac{\partial F}{\partial X}(X),
   \label{eq:linSys}
\end{equation}
where the entries of the matrix $A$ are
\[
A_{jh} = \sum_{i=1}^n \frac{\partial G_i}{\partial
    x_j}\frac{\partial G_i}{\partial x_h},
\]
i.e. we consider only an approximation of the second derivatives \eqref{eq:der2Gi} of $F$.
However, $A$ is singular at the minimum points, since we are free to
choose the initial point along the periodic orbits. This degeneracy
can be avoided as in 
\cite{barrio:periodic}, by adding the condition on the first shooting point
\begin{equation}
   f(x_0) \cdot \Delta x_0 = 0,
   \label{eq:ortDisp}
\end{equation}
to \eqref{eq:linSys}, where $x_0, \Delta x_0$ are the first components
of $X, \Delta X$.  The system of equations \eqref{eq:linSys}, \eqref{eq:ortDisp} has
$6n+1$ equations and $6n$ unknowns, and we can solve it through the SVD
decomposition, thus obtaining the value of $\Delta X$.
We have performed the integration of the equation of motion and of the variational
equation with both the \texttt{DOP853} and the \texttt{RADAU IIA} integrators, available at
\cite{hairer:webpage}.

\section{Linear stability}
\label{s:ls}
We study the stability of the orbit of the generating particle, whose dynamics is defined
by \eqref{eq:bvp}. This corresponds to study the stability of the periodic orbit
of the full $N$-body problem with respect to symmetric perturbations. However, if the
orbit of the generating particle is unstable, also the full orbit of $N$-body is unstable.

From the standard Floquet theory we know that the monodromy matrix
$\mathfrak{M}(T)$ is a $6 \times 6$ real symplectic matrix with a double unit
eigenvalue, one arising from the periodicity of the orbit and the
other one from the energy conservation.
Since $\mathfrak{M}(T)$ is symplectic, we have only
three possibilities for the remaining eigenvalues
$\lambda_1,\lambda_2,\lambda_3,\lambda_4$:
\begin{enumerate}
   \item[1)] some of them are real and $\lambda_1\lambda_2 = 1, \lambda_3\lambda_4 = 1$;
   \item[2)] $\lambda_1,\lambda_2,\lambda_3,\lambda_4 \in \C \setminus \R$ and $\lambda_1 =
      \lambda_2^{-1} = \bar{\lambda}_3=\bar{\lambda}_4^{-1}$;
   \item[3)] $\lambda_1,\lambda_2,\lambda_3,\lambda_4 \in \C \setminus \R$ and $\lambda_1 =
      \lambda_2^{-1}=\bar{\lambda}_2, \lambda_3=\lambda_{4}^{-1}=\bar{\lambda}_4$. 
\end{enumerate}
As in \cite{kapela:eight}, we can give a stability criterion using the
values $T_1 = \lambda_1+\lambda_2$ and $T_2 = \lambda_3+\lambda_4$.
The characteristic polynomial of the monodromy matrix is
\[
\begin{split}
   p(\lambda) & =
   (\lambda-1)^2(\lambda-\lambda_1)(\lambda-\lambda_2)(\lambda-\lambda_3)(\lambda-\lambda_4)
   \\ & =
   (\lambda-1)^2(\lambda^2-T_1\lambda+1)(\lambda^2-T_2\lambda+1) \\ &
   = \lambda^6 - (T_1+T_2+2)\lambda^5 + (T_1T_2 + 2(T_1+T_2)
   +3)\lambda^4+\dots
\end{split}.
\]
Let us denote by $d_{ij}$ the generic entry of the monodromy matrix, and set
\[
a = \sum_{i=1}^6 d_{ii},\qquad b = \sum_{1 \leq i \leq j \leq
  6}(d_{ii}d_{jj}-d_{ij}d_{ji}).
\]
From the expressions of the coefficients of the characteristic
polynomial we obtain
\begin{equation}
   \begin{cases}
      T_1 + T_2 +2 = a, \\
      T_1 T_2 +2(T_1+T_2)+3 = b.
   \end{cases}
   \label{eq:relT1T2}
\end{equation}
It turns out that $T_1$ and $T_2$ are the roots of the polynomial of degree two
\begin{equation}
   q(s) = s^2 -(a-2)s + (b-2a+1).
   \label{eq:qPol}
\end{equation}
We use the following result, see~\cite{kapela:eight}.
\begin{lemma}
   Let $T_1,T_2$ the roots of the polynomial \eqref{eq:qPol}. The
   eigenvalues of the monodromy matrix lie on the unit circle if and
   only if
   \begin{equation}
      \begin{cases}
      \Delta = (a-2)^2-4(b-2a+1) > 0, \\
      \lvert T_1 \rvert < 2, \quad \lvert T_2 \rvert < 2.
   \end{cases}
      \label{eq:linStabCond}
   \end{equation}
   \label{lemma:rootsStab}
\end{lemma}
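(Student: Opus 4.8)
The plan is to exploit the factorization of the characteristic polynomial already established above,
\[
p(\lambda)=(\lambda-1)^2(\lambda^2-T_1\lambda+1)(\lambda^2-T_2\lambda+1),
\]
which decouples the four nontrivial eigenvalues into the reciprocal pairs $\{\lambda_1,\lambda_2\}$ and $\{\lambda_3,\lambda_4\}$, each pair being the root set of a single factor $\lambda^2-T_j\lambda+1$, $j=1,2$. Since every such factor has constant term $1$, its two roots are reciprocals, so the question of whether \emph{all} eigenvalues lie on the unit circle splits into two independent questions, one per factor. The whole proof then reduces to the elementary sub-claim: the two roots of $\lambda^2-T\lambda+1$ lie on the unit circle if and only if $T\in\R$ and $\lvert T\rvert\le 2$, and they are moreover non-real and simple exactly when $\lvert T\rvert<2$.

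First I would prove this sub-claim in both directions. For the forward implication, if the two roots $\lambda,\lambda^{-1}$ both have modulus one, then $\lvert\lambda\rvert=1$ gives $\lambda^{-1}=\overline{\lambda}$, so $T=\lambda+\lambda^{-1}=2\,\mathrm{Re}\,\lambda$ is real with $\lvert T\rvert\le 2$, and $\lvert T\rvert=2$ forces $\lambda=\pm 1$, the degenerate boundary. For the converse, if $T$ is real with $\lvert T\rvert\le 2$, the discriminant $T^2-4$ is nonpositive, so the roots are complex conjugates whose product equals the constant term $1$, whence they have modulus one. The same conjugation argument also delivers the complementary behaviour needed for the ``only if'' part of the lemma: a root on the unit circle forces $T$ real, so a non-real $T_j$ necessarily pushes its pair off the circle.

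Next I would lift the sub-claim to $T_1,T_2$ themselves, using that by \eqref{eq:qPol} they are the roots of the real quadratic $q(s)=s^2-(a-2)s+(b-2a+1)$. The pair $T_1,T_2$ is real if and only if $\Delta=(a-2)^2-4(b-2a+1)\ge 0$; when $\Delta<0$ they form a non-real conjugate pair, which is precisely possibility~2) in the trichotomy above and, by the previous paragraph, forces the eigenvalues off the unit circle. Combining the two reductions, all four eigenvalues lie on the unit circle exactly when $\Delta\ge 0$ together with $\lvert T_1\rvert\le 2$ and $\lvert T_2\rvert\le 2$; the strict inequalities $\Delta>0$, $\lvert T_j\rvert<2$ appearing in \eqref{eq:linStabCond} then single out possibility~3), the non-degenerate elliptic configuration in which the eigenvalues are simple, stay off the real axis, and remain distinct from the trivial double eigenvalue at $1$.

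The main obstacle is not the algebra but the careful bookkeeping at the three boundaries separating the open stability region from the unstable ones: $\Delta=0$ (a collision $T_1=T_2$, with the two reciprocal pairs merging on the circle), $\lvert T_j\rvert=2$ (an eigenvalue reaching $\pm1$ and colliding with the trivial double eigenvalue), and the passage $\Delta<0$ to a non-real conjugate pair $T_1=\overline{T_2}$. Matching the strict inequalities of the statement to exactly possibility~3) requires verifying that each boundary case either produces a real or repeated eigenvalue or destroys the genericity that the criterion certifies; this is the step where the conclusion is best read as a characterization of the open, non-degenerate elliptic regime rather than of mere membership in the closed unit circle.
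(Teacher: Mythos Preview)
Your argument is correct and rests on the same factorization $p(\lambda)=(\lambda-1)^2(\lambda^2-T_1\lambda+1)(\lambda^2-T_2\lambda+1)$ that the paper uses. The paper, however, argues only the sufficiency direction and does so by elimination: $\Delta>0$ makes $T_1,T_2$ real and distinct, ruling out configuration~2); then $\lvert T_j\rvert<2$ contradicts the inequality $\lvert\lambda+\lambda^{-1}\rvert>2$ that would hold for a real reciprocal pair, ruling out~1); only configuration~3) survives. Your route is slightly different in that you isolate the elementary sub-claim about a single factor $\lambda^2-T\lambda+1$ and prove it in both directions, which is cleaner and actually delivers the ``only if'' half that the paper's proof omits. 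You are also right to flag the boundary cases: eigenvalues can sit on the unit circle with $\Delta=0$ or $\lvert T_j\rvert=2$, so the strict inequalities in \eqref{eq:linStabCond} characterize the open, non-degenerate elliptic regime (case~3) with simple non-real eigenvalues) rather than bare membership in the unit circle. This nuance is implicit in the paper's trichotomy but not spelled out in its short proof.
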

\begin{proof}
   The hypothesis $\Delta>0$ yields that $T_1, T_2$ are real and distinct.
   This excludes condition $2)$ above.
   We also exclude condition $1)$, because in this case we have
   \[
      \lvert T_1 \rvert = \lvert \lambda_1+\lambda_2 \rvert = \lvert
      \lambda_1+\lambda_1^{-1} \rvert > 2, \quad
      \lvert T_2 \rvert = \lvert \lambda_3+\lambda_4 \rvert = \lvert
      \lambda_3+\lambda_3^{-1} \rvert > 2.
   \]
   The only possibility is the third, that is the eigenvalues of the
   monodromy matrix lie on the unit circle.

\end{proof}

By this Lemma we avoid the numerical computation of the eigenvalues
and we can establish the stability of the orbit simply by computing
the roots of a polynomial of degree two, whose coefficients depend
only on the entries of the monodromy matrix $\mathfrak{M}(T)$.

Moreover, for symmetric periodic orbits we can factorize
$\mathfrak{M}(T)$ as in \cite{roberts:figure8}:
\begin{equation}
   \mathfrak{M}(T) = (S^T\mathfrak{M}({T}/{M}))^M,
   \label{eq:monFact}
\end{equation}
where $\mathfrak{M}(t)$ is the fundamental solution of the variational
equation at time $t$.  This means that we can integrate the
variational equation only over the time span $[0,T/M]$ and
we can study the stability by applying
Lemma \ref{lemma:rootsStab} to the matrix $S^T\mathfrak{M}(T/M)$.

Our numerical computations suggest that all the periodic orbits found
in Section \ref{s:enum} are unstable.  Non rigorous numerical values
of $\Delta, T_1, T_2$ for several cases can be found at
\cite{MFweb}. To make rigorous the results, we have to integrate the
equation of motion using interval arithmetic (\cite{moore:interval}),
as explained in the next section.

\section{Validation of the results}
\label{s:valid}
We use interval arithmetic to obtain rigorous estimates of the initial
condition of a periodic orbit and its monodromy matrix. Using these
estimates we can give a computer-assisted proof of the instability of
such orbit. To integrate rigorously a system of ODEs we use the
$C^1$-Lohner algorithm \cite{zgli:lohner} implemented in the CAPD
library \cite{CAPDweb}, which is based on a Taylor method to solve the
differential equations. Given a set of initial conditions and a final
time $\tau$, this algorithm produces an enclosure of the solution and
of its derivatives with respect to the initial conditions at time
$\tau$.

Denoting by $(u,\dot{u})\in \R^6$ the position and the velocity of the
generating particle, system \eqref{eq:bvp} has the first integral
of the energy
\[
   E(u, \dot{u}) = \frac{1}{2}|\dot{u}|^2 + \frac{1}{2}\sum_{R\in{\cal
       R}\setminus\{I\}}\frac{1}{|(R-I)u|}.
\]
Fix the value of the energy and use a surface of section to search for the
periodic orbit. More precisely, we use the Poincar\'e first return map
\[
   \mathsf{p} : \Sigma \to \Sigma,
\]
with
\[
   \Sigma = \{ (u,\dot{u}) \in \R^6 : u_3 = 0, \, E(u,\dot{u}) = h \},
\]
where $u_3$ is the third component of $u$ and $h$ is the value of the
energy of an approximated initial condition $(u_0,\dot{u}_0)$.  This
condition is computed from the solution $\tilde{u}(t)$ obtained with the shooting
method explained in Section~\ref{s:shooting}, which is propagated to
reach the plane $u_3=0$. Up to a rotation $R\in{\cal R}$, we can always assume
that $\tilde{u}(t)$ passes through this plane.

To compute an enclosure of the initial condition 
we use the interval Newton method (see, for instance,
\cite{moore:interval}). Given a box $B \subset \Sigma$ around
$(u_0, \dot{u}_0)$ we define the interval
Newton operator as
\[
   N( (u_0, \dot{u}_0), B, \mathsf{f}) = (u_0, \dot{u}_0) -
   [d\mathsf{f}(B)]^{-1}\mathsf{f}(u_0,\dot{u}_0),
\]
where $\mathsf{f}(u,\dot{u}) = \mathsf{p}(u,\dot{u})-(u,\dot{u})$ and $[d\mathsf{f}(B)]$ 
denotes the interval enclosure of $d\mathsf{f}(B)$. If we are able to verify that
\begin{equation}
N((u_0,\dot{u}_0),B,\mathsf{f})\subset B,
\label{inclusion}
\end{equation}
then from the interval Newton theorem there exists 
a unique fixed point in $B$ for the Poincar\'e map $\mathsf{p}$, and
therefore a unique initial condition for the corresponding periodic
orbit.

Given a box $B$ that satisfies \eqref{inclusion}, we use the
$C^1$-Lohner algorithm to compute an enclosure for
$\mathfrak{M}(T/M)$.
Then, using the factorization \eqref{eq:monFact}, we can compute also
an enclosure for the values of $\Delta, T_1,
T_2$ and verify whether the hypotheses of Lemma \ref{lemma:rootsStab}
hold or not.

\subsection{Numerical tests}
We have applied the method described above to give a rigorous proof of
the instability of the periodic orbits listed in Table
\ref{seqlist}.\footnote{For $\nu_{1}$ and $\nu_{43}$ we rotate the orbit of the
  generating particle, as it does not pass through the plane $u_3=0$.}
The $N$-body motion corresponding to the selected cases is displayed
in Figure~\ref{orbfig}.
 
\begin{table}[h!]
  \begin{center}
  \begin{tabular}{|c|c|c|}
     \hline
    label &$M$ &vertexes of $\mathcal{Q}_{\mathcal{O}}$\cr
    \hline
   $\nu_{1}$ &$3$ &{\small$[ 1, 3, 8, 10, 16, 5, 1 ]$} \\
    $\nu_{16}$ &$2$ &{\small$[ 1, 3, 8, 18, 13, 12, 4, 9, 2, 19, 11, 14, 1]$}  \\
    $\nu_{27}$ &$3$ &{\small$[ 1, 3, 7, 20, 18, 8, 15, 4, 6, 10, 16, 5, 1]$}  \\
   $\nu_{43}$ &$4$ &{\small$[ 1, 3, 8, 15, 4, 9, 2, 5, 1]$}\\
    \hline
  \end{tabular}
  \end{center}
  \caption{List of sequences used in the tests. The labels correspond
    to the enumeration used in the website \cite{MFweb}. }
  \label{seqlist}
\end{table}
\begin{figure}[!ht]
    \centerline{\includegraphics[width=5.5cm]{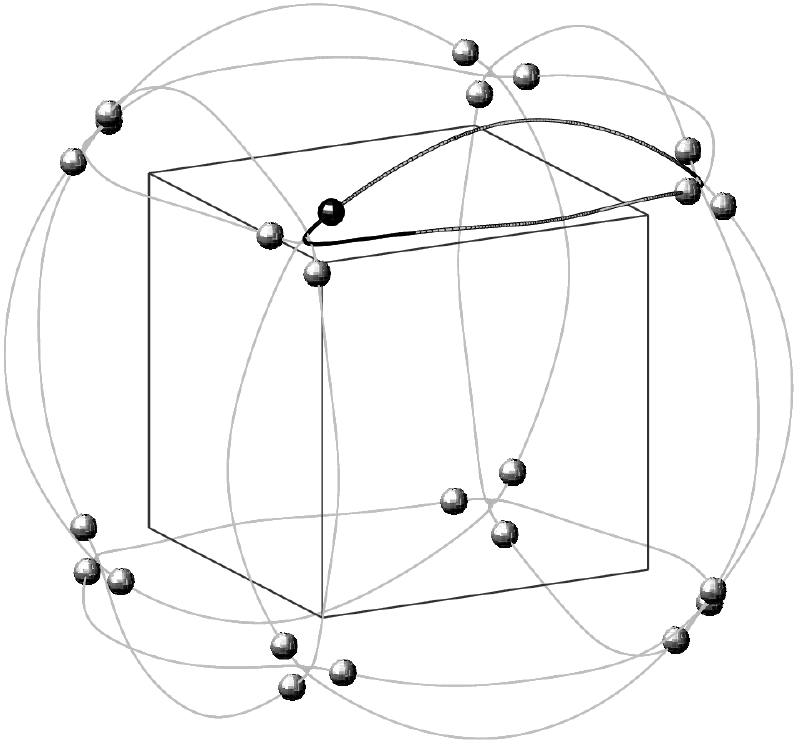}\hskip 1cm
    \includegraphics[width=5cm]{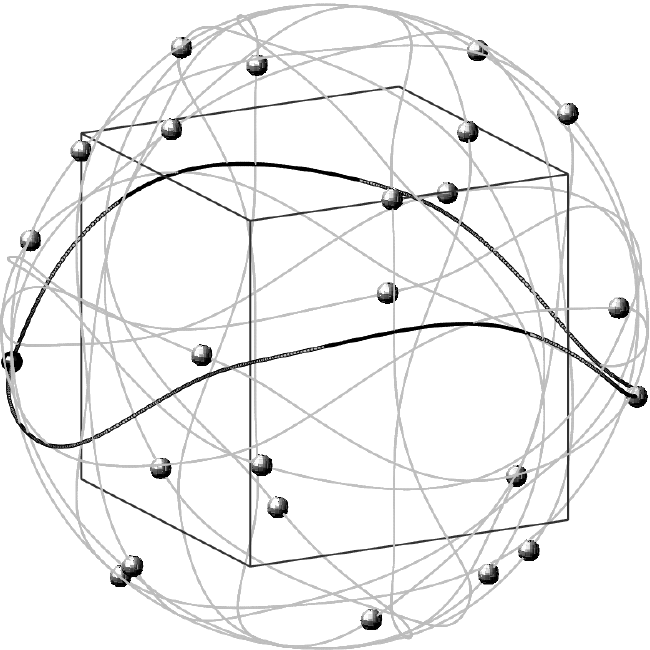}}
  \vskip 0.4cm
    \centerline{\includegraphics[width=5cm]{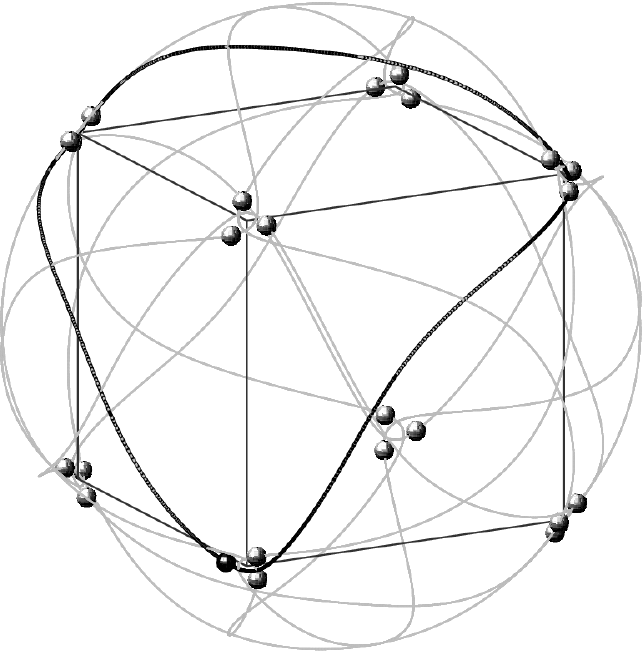}\hskip 1cm
    \includegraphics[width=5.5cm]{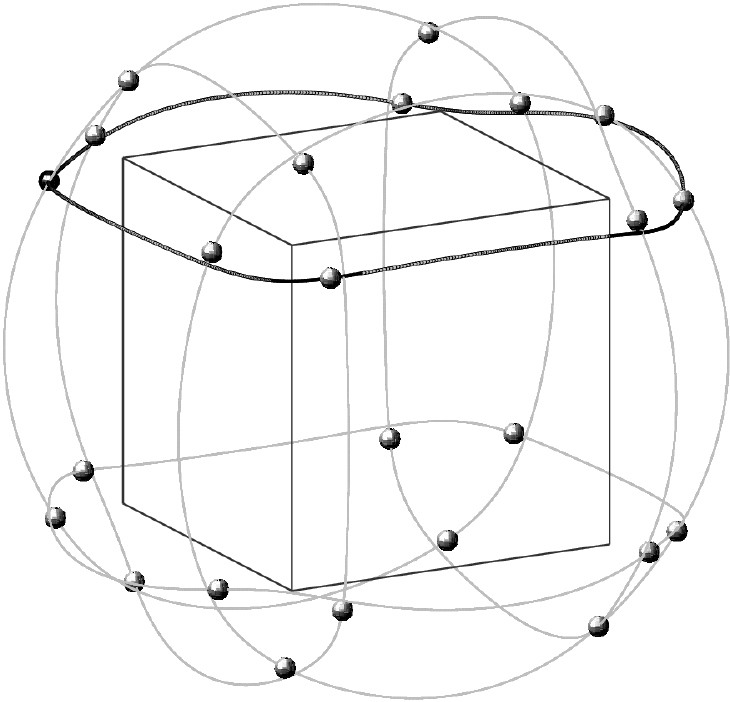}}
  \caption{Periodic motions of the $N$ bodies corresponding to the
  sequences listed in Table~\ref{seqlist}. The solid black curve represents the trajectory
  of the generating particle.}
\label{orbfig}
\end{figure}

Hereafter we shall use a notation similar to \cite{kapela:eight, kapela:KAM} to describe an interval: first we write the digits
shared by the interval extrema, then the remaining digits are reported
as subscript and superscript. Thus, for instance, we write
\[
  12.3456789_{1234}^{5678} 
\]
for the interval
\[
  [12.34567891234, 12.34567895678].
\]
For the four selected cases in Table \ref{seqlist} we checked that
condition \eqref{inclusion} holds using multiple precision interval
arithmetic. This ensures the existence of an initial condition for the orbits
in the selected box $B$. When the bodies
do not undergo close approaches, as in the cases of $\nu_{16}$ and
$\nu_{43}$, the inclusion can be checked with a much larger box.  On
the other hand, when close approaches occur, as in the cases of
$\nu_{1}$ and $\nu_{27}$, we are forced to use a very tiny box, a
longer mantissa
and a higher order for the Taylor method (see
Table~\ref{tabresIA}). This increases significantly the computational
time.
To show the instability of these orbits we prove that conditions
\eqref{eq:linStabCond} are violated.  For $\nu_{16}$ and $\nu_{43}$
these computations are successful even using only double precision
interval arithmetic. In the other two cases we need multiple precision
and a higher order for the Taylor method. As we can see from Table
\ref{tabresIA}, the value of $T_1$ is well above $2$, that yields a
computer-assisted proof of the instability of these four test orbits.
In Table~\ref{tabres} we report also the non-rigorous values of
$\Delta, T_1, T_2$, obtained by numerical integration without interval
arithmetic. Comparing the values in the two tables, we can see that
the non-rigorous ones are in good agreement with the estimates
computed with interval arithmetic. Non-rigorous values for several
other orbits can be found at \cite{MFweb}.

\begin{table}[!ht]
  \begin{center}
  \begin{tabular}{|c|c|c|c|c|c|c|}
     \hline
    label  & mantissa (bits) &size($B$) &order & $\Delta$ &$T_1$ &$T_2$\cr
    \hline
    $\stackrel{\phantom{R}}{\nu_{1}}$  & $100$ & $2\cdot 10^{-18}$ & 30  &$1488.95_{2965}^{3031}$ &$43.365_{499}^{500}$ &$4.778546_1^6$ \\
    \hline
    $\stackrel{\phantom{R}}{\nu_{16}}$  & $52$ & $2\cdot 10^{-14}$ &15 & $90582.1_{06}^{30}$ &$301.0993_{68}^{89}$ &$0.1307_{359}^{566}$ \\
                        & $\stackrel{\phantom{}}{100}$ & $2\cdot 10^{-14}$ &30 & $90582.1_{14}^{22}$ &$301.0993_{76}^{82}$ &$0.13074_{33}^{92}$ \\
\hline
    $\stackrel{\phantom{R}}{\nu_{27}}$  & $100$ & $2\cdot 10^{-25}$ &30 & $5105.47178_{6}^{7}$ &  $73.279035_{5}^{6}$  &  $1.826451_{3}^{4}$ \\
\hline
    $\stackrel{\phantom{R}}{\nu_{43}}$  & $52$ & $2\cdot 10^{-14}$ &15 & $7.0355_{19}^{20}$ & $9.2322605_1^9$  &  $6.579805_0^1$ \\
                        & $\stackrel{\phantom{}}{100}$& $2\cdot 10^{-14}$ &30 & $7.0355_{19}^{20}$ &$9.2322605_{3}^{8}$ &$6.579805_{0}^{1}$ \\
    \hline
  \end{tabular}
  \end{center}
  \caption{Enclosures for the values of $\Delta, T_1, T_2$.}
  \label{tabresIA}
\end{table}

\begin{table}[!ht]
  \begin{center}
  \begin{tabular}{|c|c|c|c|}
     \hline
    label  & $\Delta$ &$T_1$ &$T_2$\cr
        \hline
        {\small$\nu_{1}$}  & $1488.953003$ &  $43.365500$ &  $4.778546$ \\
    \hline
        {\small$\nu_{16}$}  & $90582.118054$ &  $301.099379$ &  $0.130746$ \\
        \hline
            {\small$\nu_{27}$}  & $5105.471786$  &  $73.279035$  &  $1.826451$ \\
        \hline
    {\small$\nu_{43}$}  & $7.035519$     &  $9.232260$   &  $6.579805$ \\
    \hline
  \end{tabular}
  \end{center}
  \caption{Non rigorous values of $\Delta, T_1, T_2$.}
  \label{tabres}
\end{table}

\section{Conclusions and future work}

Using the algorithm described in Section \ref{s:enum}, we created a
list of all the periodic orbits of the $N$-body problem whose
existence can be proved as in \cite{fgn10}, where only a few of them
were listed. We also set up a procedure aimed to compute these orbits, 
and investigated the stability for a large number of them.
All the solutions found with the rotation groups $\{\mathcal{T},
\mathcal{O}\}$ appear to be unstable, with a large value of $|T_1|$ or
$|T_2|$ (see the website \cite{MFweb} for the results).
Using multiple precision interval arithmetic, we were able to make
rigorous these results for a few orbits, producing a computer-assisted
proof of their instability.
From the numerical point of view, the main difficulty is to automatize
the choice of the parameters appearing in the computations: the order
of the Fourier polynomials, the number of shooting points, the size of
the boxes, the order of the Taylor method, the mantissa size for 
multiple precision computations, etc.
Moreover, when the bodies undergo close approaches a longer
computational time is needed to check whether condition
\eqref{inclusion} holds, because a larger size of the mantissa is
required.
This requires a long computational time. For these reasons we
performed interval arithmetic computations only for a few orbits in
our list.

We have also to point out that there is no guarantee that the computed
periodic solutions
correspond to minimizers of the action $\A$, whose existence have been
assessed in Sections~\ref{s:exist}, \ref{s:enum}. Indeed, with the
proposed procedure, we first search for a minimizer of $\A$ in a
finite dimensional set of Fourier polynomials by a gradient method,
decreasing the value of the action; then we refine these solutions by
a shooting method. This algorithm is meant to obtain local minima of
the action $\A$.  However, there is no proof, not even
computer-assisted, that the computed solutions minimize the action
$\A$ in the cone $\K$.  We plan to investigate this aspect in the future.

As a final remark, we observe that the procedure described in
Section \ref{s:valid} can also be used to prove the existence of
periodic orbits not included in our list.  For example, the sequence
\[
   \nu = [ 1, 3, 7, 18, 20, 24, 12, 4, 9, 17, 19,
     21, 23, 14, 1 ],
\]
of vertexes of $\mathcal{Q}_{\mathcal{O}}$ does not satisfy the
condition on the maximal length \eqref{eq:actionPiecewise}: in fact it
has $M=2$, $l_{\text{max}}(2) = 12$ and the length of $\nu$ is $14$.
This means that we cannot exclude total collisions.  However, the
validity of condition \eqref{inclusion} can be checked numerically.
Using the approximated orbit obtained with multiple shooting, we were
able to check that \eqref{inclusion} holds using a box with size $2
\cdot 10^{-14}$. In this way, we have a computer-assisted proof of the
existence of a periodic orbit belonging to $\K(\nu)$, represented in
Figure~\ref{fig:concl}.

\begin{figure}[h!]
  \centerline{\includegraphics[width=5cm]{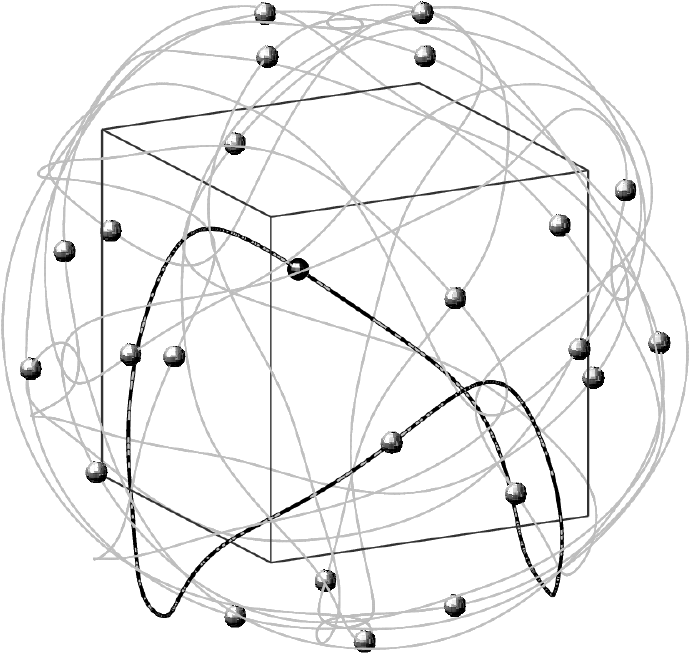}}
  \caption{Periodic motion with $\nu$ not fulfilling condition
    \eqref{eq:actionPiecewise}.}
  \label{fig:concl}
\end{figure}

We also found the values
\[
\Delta=4423_{29}^{58}, \quad T_1 =665.3_{83}^{94}, \quad T_2=0._{293}^{305}, 
\]
that yields a rigorous proof of the instability of this orbit.

\section*{Acknowledgements}
We wish to thank T. Kapela and O. van Koert for their useful suggestions
concerning this work.
Both authors have been partially supported by the University of Pisa
via grant PRA-2017 `Sistemi dinamici in analisi, geometria, logica e
meccanica celeste', and by the GNFM-INdAM (Gruppo Nazionale per la
Fisica Matematica).

\bibliography{platorb_ls}{} 

\end{document}